\newtheorem{Thm}{Theorem}
\newcommand{\bra}[1]{{\left\langle #1 \right|}}
\newcommand{\ket}[1]{{\left| #1 \right\rangle}}
\newcommand{\Z}{\mbox{$\mathbb Z$}}
\newcommand{\T}{\mbox{$\mathrm{tr}$}}
\begin{document}
\title{Quantum states for perfectly secure secret sharing}

\author{Dong Pyo Chi}
\affiliation{
 Department of Mathematical Sciences,
 Seoul National University, Seoul 151-742, Korea
}
\author{Jeong Woon Choi}
\affiliation{
 Department of Mathematical Sciences,
 Seoul National University, Seoul 151-742, Korea
}
\author{Jeong San Kim}
\affiliation{
 Institute for Quantum Information Science,
 University of Calgary, Alberta T2N 1N4, Canada
}
\author{Taewan Kim}
\affiliation{
 Department of Mathematical Sciences,
 Seoul National University, Seoul 151-742, Korea
}
\author{Soojoon Lee}
\affiliation{
 Department of Mathematics and Research Institute for Basic Sciences,
 Kyung Hee University, Seoul 130-701, Korea
}

\date{\today}

\begin{abstract}
In this work, we investigate what kinds of quantum states are feasible
to perform perfectly secure secret sharing,
and present its necessary and sufficient conditions.
We also show that the states are bipartite distillable for all bipartite splits,
and hence the states could be distillable into
the Greenberger-Horne-Zeilinger state.
We finally exhibit a class of secret-sharing
states, which have an arbitrarily small amount of
bipartite distillable entanglement for a certain split.
\end{abstract}

\pacs{
03.67.Hk, 
03.65.Ud, 
03.67.Mn  
}
\maketitle

{\em Introduction.---}
Entanglement has been considered as one of
the most crucial resources for quantum communication, which has
been shown to be perfectly secure against any interior/exterior
eavesdropper. The perfect security seems to be due to the pure
entanglement. However, in the case of the quantum key
distribution, appropriately defining its perfect security as in
Refs.~\cite{HHHO1,HHHO2}, we can see that only pure entanglement
does not guarantee the perfect security.

We here focus on another quantum communication protocol, the
quantum secret sharing of classical information, originally
presented by Hillery {\em et al.}~\cite{HBB}. Our question is what
kinds of quantum states are feasible to perform the perfectly
secure secret sharing (PSSS). In order to answer this question,
first of all, it is required to present the conditions for the
PSSS.

One of the most important problems in secret sharing of classical information is
how to share random bits securely
between one dealer, Alice and other players, Bob and Charlie.
\begin{table}[h]\label{Table00}
\caption{Random bits for secret sharing}
\begin{center}
\begin{tabular}{c|c|c} \hline\hline
Alice & Bob & Charlie \\ \hline
\multirow{2}{*}{0} & 0 & 0 \\
& 1 & 1 \\ \hline
\multirow{2}{*}{1} & 0 & 1 \\
& 1 & 0 \\ \hline\hline
\end{tabular}
\end{center}
\end{table}
If each participant would securely share one of the random bit sequences
as in TABLE~I, 
then Alice could secretly make Bob and Charlie share her secret bit.

Thus, for the PSSS, the two following conditions must be
satisfied: (i)~Probability distributions of all participants'
secret bits should be unbiased and perfectly correlated, that is,
if we let $p_{ijk}$ be the probability that Alice, Bob, and
Charlie get the random bits $i$, $j$, and $k$, respectively, then
$p_{000}=p_{011}=p_{101}=p_{110}=1/4$ and $p_{ijk}=0$ for other
$i$, $j$, $k$. (ii)~Eavesdropper should not be able to obtain any
information about participants' secret bits.

In this work, according to the two above conditions, we show that
$\rho_{ABCA'B'C'}$ is a quantum state for the PSSS if and only if
it is of the form
\begin{equation}
\frac{1}{4}\sum_{ i+j+k \equiv 0 \pmod{2} \atop i'+j'+k' \equiv 0 \pmod{2} }
\ket{ijk}_{ABC}\bra{i'j'k'}\otimes U_{ijk}\rho_{A'B'C'} U_{i'j'k'}^{\dagger},
\label{eq:PQSS_state}
\end{equation}
where $\rho_{A'B'C'}$ is a state on subsystem $A'B'C'$,
and $U_{ijk}$'s are unitary operators.
We call this form of states in~(\ref{eq:PQSS_state}) the {\em secret-sharing states}.

We also show that
the states are bipartite distillable for all bipartite splits.
From the results of D\"{u}r {\em et al.}~\cite{DCT}
we can readily derive the fact that
if any $n$-qubit state has negative partial transposition for all bipartite splits
then it is distillable into the Greenberger-Horne-Zeilinger (GHZ)~\cite{GHZ} state.
Hence, the secret-sharing states could be also distillable into the GHZ state.

Furthermore, we show that our results can be generalized into
multipartite cases, that is, $\rho_n$ is an $n$-qubit state for
the PSSS consisting of one dealer and $n-1$ players if and only if
it is of the form
\begin{equation}
\frac{1}{2^{n-1}}\sum_{I, J \in \Z_2^n \atop \mathrm{even~parity} }
\ket{I}_{A_1A_2\cdots A_n}\bra{J}\otimes U_{I}\rho_{A_1'A_2'\cdots A_n'} U_{J}^{\dagger},
\label{eq:nPQSS_state}
\end{equation}
where $\rho_{A_1'A_2'\cdots A_n'}$ is a state on subsystem $A_1'A_2'\cdots A_n'$,
and $U_{I}$'s are unitary operators,
and that $\rho_n$ is bipartite distillable for its all bipartite splits.
Hence, as in the three-party case,
$\rho_n$ could be distillable into the $n$-qubit GHZ state.


{\em Secret-Sharing States.---} 
We first provide necessary and sufficient conditions for a state to perform
the PSSS. Let $A$, $B$, and $C$ be qubit systems, and $A'$, $B'$,
and $C'$ be of arbitrary dimensions. Here $AA'$, $BB'$, and $CC'$
are Alice's, Bob's, and Charlie's systems, respectively. Then we
obtain the following theorem.

\begin{Thm}\label{Thm:SS_states}
Any state is a quantum state for the 3-party PSSS if and only if
it is a secret-sharing state of the form in (\ref{eq:PQSS_state}).
\end{Thm}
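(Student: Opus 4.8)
The plan is to treat the two defining requirements of the PSSS separately, modelling the eavesdropper by a purifying system. Recall that the protocol fixes the honest parties' action: Alice, Bob and Charlie measure $A$, $B$, $C$ in the computational basis, so the outcome probability is $p_{ijk}=\bra{ijk}\rho_{ABC}\ket{ijk}$ with $\rho_{ABC}=\T_{A'B'C'}\rho_{ABCA'B'C'}$, and the eavesdropper holds a purification $\ket{\Psi}_{ABCA'B'C'E}$ of $\rho_{ABCA'B'C'}$. In this language condition~(i) reads $p_{ijk}=1/4$ for even-parity $(i,j,k)$ and $p_{ijk}=0$ otherwise, while condition~(ii) amounts to requiring that Eve's conditional state $\rho_E^{ijk}$ after the honest measurement be independent of $(i,j,k)$ --- giving Eve anything less than a purification only weakens her, and all purifications are isometrically equivalent, so this is the right worst-case formulation, in the spirit of Refs.~\cite{HHHO1,HHHO2}.

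For the \emph{sufficiency} direction I would simply verify the two conditions on a state of the form~(\ref{eq:PQSS_state}): one finds $p_{ijk}=\frac14\T(U_{ijk}\rho_{A'B'C'}U_{ijk}^{\dagger})$, equal to $1/4$ or $0$ according to the parity of $i+j+k$, which is~(i); and for~(ii) one fixes a purification $\ket{\psi}_{A'B'C'E}$ of $\rho_{A'B'C'}$ and checks that
\[
\ket{\Psi}=\frac12\sum_{i+j+k\ \mathrm{even}}\ket{ijk}_{ABC}\otimes(U_{ijk}\otimes I_E)\ket{\psi}_{A'B'C'E}
\]
is a purification of~(\ref{eq:PQSS_state}) on which, after the outcome $(i,j,k)$, Eve holds $\T_{A'B'C'}[(U_{ijk}\otimes I_E)\ket{\psi}\bra{\psi}(U_{ijk}^{\dagger}\otimes I_E)]=\T_{A'B'C'}\ket{\psi}\bra{\psi}$, manifestly independent of $(i,j,k)$.

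For \emph{necessity} I would expand $\rho_{ABCA'B'C'}=\sum_{ijk,i'j'k'}\ket{ijk}\bra{i'j'k'}_{ABC}\otimes\sigma_{ijk,i'j'k'}$ and first use condition~(i): it forces $\T\sigma_{ijk,ijk}=0$, hence $\sigma_{ijk,ijk}=0$, for every odd-parity $(i,j,k)$, and in a positive semidefinite block matrix a vanishing diagonal block kills its whole row and column, so $\sigma_{ijk,i'j'k'}=0$ unless $i+j+k$ and $i'+j'+k'$ are both even. Consequently any purification has the form $\ket{\Psi}=\sum_{i+j+k\ \mathrm{even}}\ket{ijk}_{ABC}\otimes\ket{\phi_{ijk}}_{A'B'C'E}$ with $\inn{\phi_{ijk}}{\phi_{ijk}}=1/4$. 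Next I would invoke condition~(ii): the unnormalized conditional states $\T_{A'B'C'}\ket{\phi_{ijk}}\bra{\phi_{ijk}}=\frac14\rho_E$ are the same for all even $(i,j,k)$, so $\ket{\phi_{ijk}}$ and the reference $\ket{\phi_{000}}$ are equal-norm purifications of one and the same operator on $E$; their $A'B'C'$ marginals therefore have equal rank, and the equivalence of purifications supplies unitaries $U_{ijk}$ on $\mathcal{H}_{A'B'C'}$ (with $U_{000}=I$) such that $\ket{\phi_{ijk}}=(U_{ijk}\otimes I_E)\ket{\phi_{000}}$. Substituting this expansion into $\rho_{ABCA'B'C'}=\T_E\ket{\Psi}\bra{\Psi}$ then reproduces~(\ref{eq:PQSS_state}) with $\rho_{A'B'C'}=4\,\T_E\ket{\phi_{000}}\bra{\phi_{000}}$.

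The step I expect to be the main obstacle is the very first one: fixing the operationally correct meaning of ``the eavesdropper obtains no information'' and arguing that it is equivalent to the coincidence of Eve's conditional post-measurement states on a worst-case purification. Once that is granted, condition~(i) is dispatched by the vanishing-block argument and condition~(ii) by the uniqueness of purifications; the only remaining care is the routine upgrade of the partial isometry relating the $\ket{\phi_{ijk}}$ to an honest unitary on $\mathcal{H}_{A'B'C'}$, which is legitimate because the relevant marginals have equal rank.
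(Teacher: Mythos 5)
Your proof is correct, and its necessity half is essentially the paper's argument: expand a purification over the computational basis of $ABC$, use condition (i) to kill the odd-parity terms, and use condition (ii) plus the equivalence of purifications (the GHJW theorem in the paper) to obtain unitaries $U_{ijk}$ on $A'B'C'$ relating the conditional purifying vectors; your phrasing via $\ket{\phi_{ijk}}=(U_{ijk}\otimes I_E)\ket{\phi_{000}}$ and the paper's phrasing via spectral decompositions $\ket{\Psi_{ijk}}=\sum_l\sqrt{\lambda_l}\,U_{ijk}\ket{\psi_l}\ket{\phi_l}$ are the same step, and your care about upgrading the partial isometry to a unitary (equal ranks of the $A'B'C'$ marginals) is the right caveat. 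Where you genuinely diverge is the sufficiency direction: you construct one explicit purification $\frac12\sum\ket{ijk}\otimes(U_{ijk}\otimes I_E)\ket{\psi}$ and observe directly that a unitary acting only on $A'B'C'$ leaves Eve's marginal untouched, whereas the paper passes through the block-matrix form of the state, the identity $\|X_{ijk,i'j'k'}\|_1=\T\bigl|\sqrt{\rho^E_{ijk}}\sqrt{\rho^E_{i'j'k'}}\bigr|=F(\rho^E_{ijk},\rho^E_{i'j'k'})$, and the fact that these trace norms equal one. Your route is shorter and more elementary (you also justify explicitly why checking a single purification suffices, which the paper leaves implicit); the paper's fidelity--trace-norm route buys something you lose, namely the machinery reused immediately afterwards in Theorem~\ref{Thm:block_SS} and in the remark that trace-norm conditions on three well-chosen off-diagonal blocks already certify a secret-sharing state. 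So your argument stands on its own, but if you wanted your write-up to feed the subsequent block-matrix characterization you would still need the fidelity identity as a separate observation.
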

\begin{proof}
We first assume that $\rho_{ABCA'B'C'}$ is a quantum state for the
PSSS, and let $\ket{\Psi}_{ABCA'B'C'E}$ be its purification as
follows:
\begin{equation}
\ket{\Psi}_{ABCA'B'C'E}=
\sum_{i,j,k}\sqrt{p_{ijk}}\ket{ijk}_{ABC} \ket{\Psi_{ijk}}_{A'B'C'E},
\label{eq:purification}
\end{equation}
where $E$ is the system of the eavesdropper, Eve.
Since probability distributions of all participants' secret bits
are unbiased and perfectly correlated,
it is clear that
$p_{000}=p_{011}=p_{101}=p_{110}=1/4$ and $p_{ijk}=0$ for other $i$, $j$, $k$.
Thus, the state $\ket{\Psi}$ becomes
\begin{equation}
\frac{1}{2}\sum_{i+j+k\equiv 0 \pmod{2}}\ket{ijk}_{ABC} \ket{\Psi_{ijk}}_{A'B'C'E}.
\label{eq:Psi2}
\end{equation}
For each of all participants' measurement result $ijk$,
let $\rho_{ijk}^E$ be Eve's state after the measurement.
Then $\rho_{ijk}^E=\T_{A'B'C'}\left(\ket{\Psi_{ijk}}\bra{\Psi_{ijk}}\right)$.
Since Eve cannot obtain any information about participants' secret bit at all,
we have $\rho_{000}^E=\rho_{011}^E=\rho_{101}^E=\rho_{110}^E$.
For each $i$, $j$, $k$,
let $\rho_{ijk}^E=\sum_{l}\lambda_l \ket{\phi_l}\bra{\phi_l}$
be its spectral decomposition.
Then it follows from Gisin-Hughston-Jozsa-Wootters (GHJW) theorem~\cite{GHJW}
that for each $i$, $j$, $k$,
there are unitary operators $U_{ijk}$ on the system $A'B'C'$ such that
\begin{equation}
\ket{\Psi_{ijk}}=\sum_{l}\sqrt{\lambda_l}U_{ijk}\ket{\psi_l}_{A'B'C'}\ket{\phi_l}_E,
\label{eq:Psi_ijk}
\end{equation}
where $\ket{\psi_{l}}$ forms an orthonormal set for the system $A'B'C'$.
Hence, by Eqs. (\ref{eq:Psi2}) and (\ref{eq:Psi_ijk}),
$\rho_{ABCA'B'C'}$ is of the form
\begin{equation}
\frac{1}{4}\sum_{ i+j+k \equiv 0 \pmod{2} \atop i'+j'+k' \equiv 0 \pmod{2} }
\ket{ijk}_{ABC}\bra{i'j'k'}\otimes U_{ijk}\rho_{A'B'C'} U_{i'j'k'}^{\dagger},
\label{eq:PQSS_state2}
\end{equation}
where $\rho_{A'B'C'}=\sum_l \lambda_l \ket{\psi_l}\bra{\psi_l}$.

Conversely, we now assume that a given state $\rho_{ABCA'B'C'}$ is
of the form~(\ref{eq:PQSS_state}). Then since for the
probabilities $p_{ijk}$ that participants get the bits $ijk$
$p_{000}=p_{011}=p_{101}=p_{110}=1/4$ and $p_{ijk}=0$ for other
$i$, $j$, $k$, probability distributions for the secret bits are
clearly unbiased and perfectly correlated. Thus, it suffices to
show that Eve cannot any information about the secret bits, that
is, $\rho_{000}^E=\rho_{011}^E=\rho_{101}^E=\rho_{110}^E$, where
$\rho_{ijk}^E$ is Eve's state when participants' measurement
result is $ijk$. For convenience, we consider the following block
matrix form of $\rho_{ABCA'B'C'}$:
\begin{equation}
\frac{1}{4}
\left(
\begin{array}{cccc|cccc}
  X_{000,000} & 0 & 0 & X_{000,011} & 0 & X_{000,101} & X_{000,110} & 0 \\
  0 & 0 & 0 & 0 & 0 & 0 & 0 & 0 \\
  0 & 0 & 0 & 0 & 0 & 0 & 0 & 0 \\
  X_{011,000} & 0 & 0 & X_{011,011} & 0 & X_{011,101} & X_{011,110} & 0 \\
  &&&&&&& \\ \hline
  &&&&&&& \\
  0 & 0 & 0 & 0 & 0 & 0 & 0 & 0 \\
  X_{101,000} & 0 & 0 & X_{101,011} & 0 & X_{101,101} & X_{101,110} & 0 \\
  X_{110,000} & 0 & 0 & X_{110,011} & 0 & X_{110,101} & X_{110,110} & 0 \\
  0 & 0 & 0 & 0 & 0 & 0 & 0 & 0 \\
\end{array}
\right),
\label{eq:matrix_form}
\end{equation}
where $X_{ijk,i'j'k'}=U_{ijk} \rho_{A'B'C'} U_{i'j'k'}^{\dagger}$.
Then we can readily obtain that
the trace norm of $X_{ijk,i'j'k'}$ is one, that is, $\|X_{ijk,i'j'k'}\|_1=1$.
Let the state in Eq.~(\ref{eq:Psi2}) be the purification of $\rho_{ABCA'B'C'}$.
Then we have
$X_{ijk,i'j'k'}=\T_E\left(\ket{\Psi_{ijk}}\bra{\Psi_{i'j'k'}}\right)$.
It follows from straightforward calculations that
\begin{equation}
\|X_{ijk,i'j'k'}\|_1=\T\left|\sqrt{\rho_{ijk}^E}\sqrt{\rho_{i'j'k'}^E}\right|
=F(\rho_{ijk}^E,\rho_{i'j'k'}^E),
\label{eq:X}
\end{equation}
where $F$ is the fidelity.
Since $F(\rho_{ijk}^E,\rho_{i'j'k'}^E)=1$ for every $i$, $j$, $k$,
the proof is completed.
\end{proof}
We remark that, as seen in the proof of Theorem~\ref{Thm:SS_states},
in order to prove its converse,
it is sufficient to use
that the trace norms of three well-chosen off-diagonal blocks are one,
for example, $\|X_{000,011}\|_1=\|X_{011,101}\|_1=\|X_{101,110}\|_1=1$.
Moreover, any block matrix of the form in~(\ref{eq:matrix_form})
whose three well-chosen off-diagonal blocks have trace norm 1/4
forms a secret state as follows.

\begin{Thm}\label{Thm:block_SS}
$\sigma_{ABCA'B'C'}$ is a state which can be expressed as the following block-matrix form:
\begin{equation}
\sigma_{ABCA'B'C'} =
\left(%
\begin{array}{cccc|cccc}
  \star & 0 & 0 & X & 0 & \star & \star & 0 \\
  0 & 0 & 0 & 0 & 0 & 0 & 0 & 0 \\
  0 & 0 & 0 & 0 & 0 & 0 & 0 & 0 \\
  \star & 0 & 0 & \star & 0 & Y & \star & 0 \\ \hline
  0 & 0 & 0 & 0 & 0 & 0 & 0 & 0 \\
  \star & 0 & 0 & \star & 0 & \star & Z & 0 \\
  \star & 0 & 0 & \star & 0 & \star & \star & 0 \\
   0 & 0 & 0 & 0 & 0 & 0 & 0 & 0 \\
\end{array}
\right),
\label{eq:block_sigma}
\end{equation}
where $\| X \|_{1}=\| Y \|_{1}=\| Z \|_{1}=1/4$
if and only if $\sigma_{ABCA'B'C'}$ is a secret-sharing state.
\end{Thm}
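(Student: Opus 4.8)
The plan is to treat the two implications separately, the forward one being essentially a rewriting. If $\sigma_{ABCA'B'C'}$ is a secret-sharing state, then by~(\ref{eq:PQSS_state}) it equals $\frac{1}{4}\sum_{a,b}\ket{e_a}\bra{e_b}\otimes U_a\rho_{A'B'C'}U_b^\dagger$, where $\ket{e_a}$ runs over the computational-basis states of $ABC$ labeled by the even-parity strings $000,011,101,110$; this is already of the block form~(\ref{eq:block_sigma}), since every odd-parity row and column vanishes, and each of the three marked blocks equals $\frac{1}{4}U_{ijk}\rho_{A'B'C'}U_{i'j'k'}^\dagger$, whose trace norm is $\frac{1}{4}\|\rho_{A'B'C'}\|_1=\frac{1}{4}$ because conjugation by unitaries preserves singular values.

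For the converse, assume $\sigma:=\sigma_{ABCA'B'C'}$ is a state of the form~(\ref{eq:block_sigma}) with $\|X\|_1=\|Y\|_1=\|Z\|_1=1/4$. By the displayed zero pattern, $\sigma$ vanishes on all odd-parity rows and columns, hence is supported on the even-parity subspace of $ABC$ tensored with $A'B'C'$. Labeling the even-parity strings $1,2,3,4$ for $000,011,101,110$ and writing $\sigma=\sum_{a,b}\ket{e_a}\bra{e_b}\otimes\sigma_{ab}$, we have $X=\sigma_{12}$, $Y=\sigma_{23}$, $Z=\sigma_{34}$. Next I would purify, $\ket{\Psi}=\sum_a\ket{e_a}\ket{\Phi_a}_{A'B'C'E}$, write $\ket{\Phi_a}=\sqrt{p_a}\,\ket{\hat\Phi_a}$ with $\ket{\hat\Phi_a}$ normalized, $p_a=\T\sigma_{aa}\ge0$, $\sum_a p_a=1$, and let $\hat\rho^E_a=\T_{A'B'C'}(\ket{\hat\Phi_a}\bra{\hat\Phi_a})$ be Eve's normalized post-measurement state for outcome $a$. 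The computation in the proof of Theorem~\ref{Thm:SS_states} gives $\|\T_E(\ket{\hat\Phi_a}\bra{\hat\Phi_b})\|_1=F(\hat\rho^E_a,\hat\rho^E_b)$ block by block, hence $\|\sigma_{ab}\|_1=\sqrt{p_a p_b}\,F(\hat\rho^E_a,\hat\rho^E_b)$.

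The heart of the argument is a short extremal step. Since $F\le1$, the hypotheses force $p_1 p_2\ge1/16$ and $p_3 p_4\ge1/16$, whence by the AM--GM inequality $p_1+p_2\ge2\sqrt{p_1 p_2}\ge1/2$ and $p_3+p_4\ge1/2$; as $p_1+p_2+p_3+p_4=1$, all these become equalities, which forces $p_1=p_2=p_3=p_4=1/4$. Substituting back, $\|\sigma_{12}\|_1=\|\sigma_{23}\|_1=\|\sigma_{34}\|_1=1/4$ now forces $F(\hat\rho^E_1,\hat\rho^E_2)=F(\hat\rho^E_2,\hat\rho^E_3)=F(\hat\rho^E_3,\hat\rho^E_4)=1$, i.e.\ $\hat\rho^E_1=\hat\rho^E_2=\hat\rho^E_3=\hat\rho^E_4$. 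This is precisely where ``well-chosen'' enters: the three marked positions are the pairs $\{000,011\}$, $\{011,101\}$, $\{101,110\}$, which form the connected path $000$--$011$--$101$--$110$ on the four even-parity strings, so equality of Eve's state across those three pairs propagates to every pair. From here the proof closes verbatim as in Theorem~\ref{Thm:SS_states}: with $\hat\rho^E=\sum_l\lambda_l\ket{\phi_l}\bra{\phi_l}$ the common value, the GHJW theorem~\cite{GHJW} yields unitaries $U_a$ on $A'B'C'$ and an orthonormal set $\{\ket{\psi_l}\}$ with $\ket{\hat\Phi_a}=\sum_l\sqrt{\lambda_l}\,U_a\ket{\psi_l}_{A'B'C'}\ket{\phi_l}_E$, and since $p_a=1/4$ we obtain $\sigma_{ab}=\T_E(\ket{\Phi_a}\bra{\Phi_b})=\frac{1}{4}U_a\rho_{A'B'C'}U_b^\dagger$ with $\rho_{A'B'C'}=\sum_l\lambda_l\ket{\psi_l}\bra{\psi_l}$, so $\sigma$ is a secret-sharing state. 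I expect the main obstacle to be exactly this extremal step: extracting from the three scalar constraints the two simultaneous conclusions $p_a\equiv1/4$ and $F\equiv1$ on the marked pairs, and verifying that the chosen positions are ``well-chosen'' (that the three pairs are connected) so the second conclusion propagates; the rest is a rerun of Theorem~\ref{Thm:SS_states}.
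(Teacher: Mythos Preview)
Your proposal is correct and follows essentially the same route as the paper: purify, identify the off-diagonal blocks with $\sqrt{p_a p_b}\,F(\hat\rho^E_a,\hat\rho^E_b)$, use AM--GM together with $\sum_a p_a=1$ to force $p_a=1/4$ and the three fidelities equal to $1$, then conclude. The only cosmetic difference is that the paper ends by saying ``hence $\sigma$ is a state for the PSSS, so by Theorem~\ref{Thm:SS_states} it is a secret-sharing state,'' whereas you unpack that citation and rerun the GHJW step explicitly; the content is the same.
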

\begin{proof}
Since any secret-sharing state is of the form~(\ref{eq:block_sigma}),
it suffices to show that $\sigma_{ABCA'B'C'}$ in~(\ref{eq:block_sigma})
is a secret-sharing state.
For each $ijk$,
let $p_{ijk}$ be the trace of $(ijk,ijk)$ block entry of $\sigma_{ABCA'B'C'}$.
Then
\begin{eqnarray}
\ket{\Psi}&=&
\sqrt{p_{000}}\ket{000}_{ABC} \ket{\Psi_{000}}_{A'B'C'E}
\nonumber \\
&&+\sqrt{p_{011}}\ket{011}_{ABC} \ket{\Psi_{011}}_{A'B'C'E}
\nonumber \\
&&+\sqrt{p_{101}}\ket{101}_{ABC} \ket{\Psi_{101}}_{A'B'C'E}
\nonumber \\
&&+\sqrt{p_{110}}\ket{110}_{ABC} \ket{\Psi_{110}}_{A'B'C'E}
\label{eq:purification_sigma}
\end{eqnarray}
is its purification,
and hence
we obtain
\begin{eqnarray}
X&=&\sqrt{p_{000} p_{011}}\T_E\left(\ket{\Psi_{000}}\bra{\Psi_{011}}\right),
\nonumber\\
Y&=&\sqrt{p_{011} p_{101}}\T_E\left(\ket{\Psi_{011}}\bra{\Psi_{101}}\right),
\nonumber\\
Z&=&\sqrt{p_{101} p_{110}}\T_E\left(\ket{\Psi_{101}}\bra{\Psi_{110}}\right).
\label{eq:XYZ}
\end{eqnarray}
As in the proof of Theorem~\ref{Thm:SS_states}, we have
\begin{eqnarray}
\|X\|_1&=&\sqrt{p_{000} p_{011}}F\left(\rho_{000}^E,\rho_{011}^E\right),
\nonumber\\
\|Y\|_1&=&\sqrt{p_{011} p_{101}}F\left(\rho_{011}^E,\rho_{101}^E\right),
\nonumber\\
\|Z\|_1&=&\sqrt{p_{101} p_{110}}F\left(\rho_{101}^E,\rho_{110}^E\right),
\label{eq:XYZ}
\end{eqnarray}
where $\rho_{ijk}^E=\T_{A'B'C'}\left(\ket{\Psi_{ijk}}\bra{\Psi_{ijk}}\right)$.
Since $\| X \|_{1}=\| Y \|_{1}=\| Z \|_{1}=1/4$,
we have the following inequalities:
\begin{eqnarray}
\frac{p_{000}+p_{011}}{2}&\ge&\sqrt{p_{000} p_{011}}\ge \frac{1}{4},
\nonumber\\
\frac{p_{011}+p_{101}}{2}&\ge&\sqrt{p_{011} p_{101}}\ge \frac{1}{4},
\nonumber\\
\frac{p_{101}+p_{110}}{2}&\ge&\sqrt{p_{101} p_{110}}\ge \frac{1}{4}.
\label{eq:ineq}
\end{eqnarray}
It follows from the fact $p_{000}+p_{011}+p_{101}+p_{110}=1$ that
$p_{000}=p_{011}=p_{101}=p_{110}=1/4$ and
$F\left(\rho_{000}^E,\rho_{011}^E\right)
=F\left(\rho_{011}^E,\rho_{101}^E\right)
=F\left(\rho_{101}^E,\rho_{110}^E\right)=1$. This implies that
$\sigma_{ABCA'B'C'}$ is a state for the PSSS. Therefore, it is a
secret-sharing state by Theorem~\ref{Thm:SS_states}.
\end{proof}


We note that every private state is distillable~\cite{HA}.
By employing this note, we now show that every secret-sharing state is
bipartite distillable for its all bipartite splits.

\begin{Thm}\label{Thm:distillability}
Let $\rho$ be a secret-sharing state for the 3-party secret sharing.
Then $\rho$ is bipartite distillable for its all bipartite splits
of the 3 parties.
\end{Thm}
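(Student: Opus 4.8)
For three parties there are exactly three bipartite splits, each isolating one party against the other two: $AA'\,|\,BB'CC'$, $BB'\,|\,AA'CC'$, and $CC'\,|\,AA'BB'$. I will treat $AA'\,|\,BB'CC'$ in detail; the other two are handled by the identical argument with the parties relabeled. The idea is to transform $\rho$, by a unitary that is local for this split, into the standard form of a bipartite \emph{private state}, and then invoke the fact that every private state is distillable~\cite{HA}.

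First I would use the parity constraint in the key register: $i+j+k\equiv 0\pmod 2$ is equivalent to $i=j\oplus k$. Treating Bob and Charlie as the single party $BB'CC'$ (legitimate for this split), apply $\mathrm{CNOT}$ with control $B$ and target $C$. Since $i=j\oplus k$, this maps $\ket{ijk}_{ABC}\mapsto\ket{i}_A\ket{j}_B\ket{i}_C$, so that afterwards the qubits $A$ and $C$ carry one and the same bit $i$ while $B$ carries the free bit $j$. As $\mathrm{CNOT}$ is a reversible local unitary, distillability is unchanged, so it is enough to analyze the transformed state.

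Then I would recognize the transformed state as a private state. Setting $U_{ij}:=U_{i,j,i\oplus j}$ and $W_i:=\sum_{j}\ket{j}_B\bra{j}\otimes U_{ij}$ --- a controlled, hence unitary, operator on $BA'B'C'$ --- a direct expansion of $\mathrm{CNOT}\,\rho\,\mathrm{CNOT}^{\dagger}$ gives
\begin{equation}
\frac{1}{2}\sum_{i,i'}\ket{ii}_{AC}\bra{i'i'}\otimes W_i\left(\ket{+}_B\bra{+}\otimes\rho_{A'B'C'}\right)W_{i'}^{\dagger},
\label{eq:pbit}
\end{equation}
which is exactly a bipartite private state: key qubits $A$ (on Alice's side) and $C$ (on the Bob--Charlie side), shields $A'$ and $BB'C'$, shield state $\ket{+}_B\bra{+}\otimes\rho_{A'B'C'}$, and key-controlled twisting unitaries $W_i$. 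By~\cite{HA} this state, and hence $\rho$, is distillable across $AA'\,|\,BB'CC'$; running the same steps with the parties permuted --- for the split $X\,|\,YZ$, apply $\mathrm{CNOT}$ between the two qubits on the $YZ$ side so that the $X$-qubit and one $YZ$-qubit coincide --- handles the remaining two splits.

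The main thing to get right is the bookkeeping behind~(\ref{eq:pbit}): I need to check that after the $\mathrm{CNOT}$ the dependence of the $U_{ijk}$ on the now-free bit $j$ is absorbed entirely into the shield, via the ancilla $\ket{+}_B$ and the control structure of $W_i$, without disturbing the coherence of the key pair, so that the collapsed off-diagonal structure of~(\ref{eq:PQSS_state}) matches the private-state definition used in~\cite{HA} exactly, including normalization and the fact that $W_i$ is a legitimate joint-shield unitary. I would also record explicitly that $\mathrm{CNOT}$ is a bipartite-local operation here (it acts within a single side of the cut) and that local-unitary equivalence preserves distillable entanglement; beyond these points the argument is routine.
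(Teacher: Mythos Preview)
Your argument is correct and follows essentially the same route as the paper: apply a $\mathrm{CNOT}$ between the two key qubits on the combined side of the cut so that the parity constraint forces $A$ and $C$ to carry the same bit, recognize the result as a bipartite private state, and invoke~\cite{HA}. The only difference is cosmetic --- the paper measures the now-redundant qubit $B$ after the $\mathrm{CNOT}$ and obtains a private state with shield $A'B'C'$ and twisting unitaries $U_{i,r,i\oplus r}$ for measurement outcome $r$, whereas you leave $B$ unmeasured and fold it into the shield via the controlled unitary $W_i$ and the ancilla $\ket{+}_B$; your version has the mild advantage that it uses only a local unitary rather than LOCC, so distillability is preserved exactly rather than merely not increased.
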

\begin{proof} By Theorem~\ref{Thm:SS_states},
$\rho$ can be expressed as the form of (\ref{eq:PQSS_state})
for some state $\rho_{A'B'C'}$ and unitary operators $U_{ijk}$.
Let $\mathrm{CNOT}_{ij}$ be the controlled-NOT operation
such that $i$ and $j$ represent its control system and target system, respectively.
Then applying $\mathrm{CNOT}_{BC}$ to $\rho$ and
performing the projective measurement on system $B$
with respect to the standard basis $\{\ket{0},\ket{1}\}$,
when the measurement result is $r$,
the resulting state becomes a private state,
\begin{equation}
\frac{1}{2}\sum_{i,j=0}^1 \ket{ii}_{AC}\bra{jj}\otimes U_{iri}\rho_{A'B'C'}U_{jrj}^\dagger,
\label{eq:private}
\end{equation}
which is distillable~\cite{HA}.
Thus, a given $\rho$ is bipartite distillable for the split \mbox{$AA'$-$BB'CC'$}.
Similarly, we can show that
$\rho$ is bipartite distillable for the splits \mbox{$BB'$-$CC'AA'$} and \mbox{$CC'$-$AA'BB'$}.
\end{proof}

{\em Generalization into multipartite cases.---} 
We now generalize our results
into multipartite cases.

\begin{Thm}\label{Thm:nSS_states}
$\rho_n$ is a quantum state for the $n$-party PSSS
consisting of one dealer and $n-1$ players
if and only if it is a secret-sharing state of the form in~(\ref{eq:nPQSS_state}).
\end{Thm}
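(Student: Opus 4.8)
The plan is to rerun the two-step argument of Theorem~\ref{Thm:SS_states} almost verbatim, with the set of even-parity strings in $\Z_2^n$ taking the place of $\{000,011,101,110\}$. For the ``only if'' direction, suppose $\rho_n$ on $A_1\cdots A_nA_1'\cdots A_n'$ (party $m$ holding $A_mA_m'$) realizes the $n$-party PSSS, and pass to a purification
\[
\ket{\Psi}=\sum_{I\in\Z_2^n}\sqrt{p_I}\,\ket{I}_{A_1\cdots A_n}\ket{\Psi_I}_{A_1'\cdots A_n'E},
\]
$E$ being Eve's system. Requirement~(i) --- the dealer's bit must be uniform and equal to the parity of the $n-1$ players' bits, with no other joint outcome occurring --- forces $p_I=1/2^{n-1}$ for even-parity $I$ and $p_I=0$ otherwise, so $\ket{\Psi}=2^{-(n-1)/2}\sum_{I\,\mathrm{even}}\ket{I}\ket{\Psi_I}$. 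Requirement~(ii) --- Eve learns nothing about the shared bits --- says that the post-measurement states $\rho_I^E=\T_{A_1'\cdots A_n'}(\ket{\Psi_I}\bra{\Psi_I})$ all coincide with one state $\rho^E$.

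Next I would invoke the GHJW theorem~\cite{GHJW} exactly as in the three-party case: fixing a spectral decomposition $\rho^E=\sum_l\lambda_l\ket{\phi_l}\bra{\phi_l}$, every $\ket{\Psi_I}$ is a purification of $\rho^E$ into $A_1'\cdots A_n'$, hence $\ket{\Psi_I}=\sum_l\sqrt{\lambda_l}\,U_I\ket{\psi_l}_{A_1'\cdots A_n'}\ket{\phi_l}_E$ for a single orthonormal family $\{\ket{\psi_l}\}$ and unitaries $U_I$. Substituting back and tracing out $E$ reproduces~(\ref{eq:nPQSS_state}) with $\rho_{A_1'\cdots A_n'}=\sum_l\lambda_l\ket{\psi_l}\bra{\psi_l}$.

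For the ``if'' direction, let $\rho_n$ have the form~(\ref{eq:nPQSS_state}). Its $2^{n-1}$ nonzero diagonal blocks sit exactly at the even-parity positions and each has trace $1/2^{n-1}$, so the $A_m$-measurement statistics satisfy~(i). For~(ii) it is cleanest to display one purification in which Eve's conditional state is manifestly $I$-independent: with $\rho_{A_1'\cdots A_n'}=\sum_l\lambda_l\ket{\psi_l}\bra{\psi_l}$ spectral, pick orthonormal $\{\ket{\phi_l}\}$ in a fresh system $E$, set $\ket{\Psi_I}=\sum_l\sqrt{\lambda_l}\,(U_I\ket{\psi_l})\ket{\phi_l}$ and $\ket{\Psi}=2^{-(n-1)/2}\sum_{I\,\mathrm{even}}\ket{I}\ket{\Psi_I}$; then $\T_E(\ket{\Psi}\bra{\Psi})=\rho_n$ while $\rho_I^E=\sum_l\lambda_l\ket{\phi_l}\bra{\phi_l}$ for every even $I$. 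Equivalently one copies the block-matrix computation of Theorem~\ref{Thm:SS_states}: the off-diagonal blocks $X_{I,J}=2^{-(n-1)}U_I\rho_{A_1'\cdots A_n'}U_J^\dagger$ have $\|X_{I,J}\|_1=2^{-(n-1)}$, so $F(\rho_I^E,\rho_J^E)=1$ and the $\rho_I^E$ are equal. Either way $\rho_n$ realizes the $n$-party PSSS.

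All of the computations above are the obvious transcriptions of the $n=3$ proof, so I do not expect a real obstruction; the one point that genuinely needs care is pinning down the multipartite version of condition~(ii) --- taking ``perfect security'' to mean that Eve's state conditioned on the $A_m$-outcomes is independent of the (even-parity) outcome string --- and convincing oneself that this is the faithful generalization. Once that is fixed, the GHJW step is the load-bearing ingredient and goes through unchanged.
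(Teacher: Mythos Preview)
Your proposal is correct and tracks the paper's own proof of Theorem~\ref{Thm:nSS_states} essentially line for line: purification, condition~(i) fixes $p_I=1/2^{n-1}$ on even-parity strings, condition~(ii) forces all $\rho_I^E$ to coincide, and GHJW yields the form~(\ref{eq:nPQSS_state}); for the converse the paper simply refers back to the fidelity/block-matrix computation of Theorem~\ref{Thm:SS_states}, which is your second option. Your first option for the converse---exhibiting one purification in which $\rho_I^E$ is visibly $I$-independent---is a mild shortcut (any other purification differs by a unitary on $E$, so security follows), but it is the same idea packaged differently.
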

\begin{proof}
Let $\ket{\Psi}$ be a purification of $\rho_n$ as follows:
\begin{equation}
\ket{\Psi}=
\sum_{I\in\Z_2^n}\sqrt{p_{I}}\ket{I}_{A_1A_2\cdots A_n} \ket{\Psi_{I}}_{A_1'A_2'\cdots A_n'E}.
\label{eq:n_purification}
\end{equation}
As in the case of the 3-party case,
it is clear that
$p_I=1/2^{n-1}$ for all $I$ with even parity
and $p_{J}=0$ for other $J$.
For each of all participants' measurement result $I$,
Eve's state after the measurement, $\rho_{I}^E$ becomes
$\rho_{I}^E=\T_{A_1'A_2'\cdots A_n'}\left(\ket{\Psi_{I}}\bra{\Psi_{I}}\right)$.
Since Eve cannot obtain any information about participants' secret bit at all,
all $\rho_{I}^E$'s are the same.
Thus, by GHJW theorem
there are a state $\rho_{A_1'A_2'\cdots A_n'}$ and
unitary operators $U_{I}$ on the system $A_1'A_2'\cdots A_n'$ such that
$\rho_n$ is of the form in~(\ref{eq:nPQSS_state})
and $\rho_{A_1'A_2'\cdots A_n'}$ has the same spectrum as $\rho_{I}^E$.

Conversely, assuming that
a given state $\rho_n$ is of the form~(\ref{eq:nPQSS_state}),
it can be readily shown that
$\rho_n$ is a state for $n$-party PSSS,
by the same way as the proof of Theorem~\ref{Thm:SS_states}.
\end{proof}
We call the state in~(\ref{eq:nPQSS_state}) the $n$-party secret-sharing state.
Remark that, as in Theorem~\ref{Thm:SS_states} and Theorem~\ref{Thm:block_SS},
any quantum state of the form of
$2^n\times 2^n$ block matrix,
whose block entries vanish if they are in the rows or columns of odd parity,
has well-chosen $2^{n-1}-1$ off-diagonal block entries of the trace norm $1/2^{n-1}$
if and only if the state is an $n$-party secret-sharing state.

We now consider the bipartite distillability of the $n$-party secret-sharing states.

\begin{Thm}\label{Thm:n_distillability}
Any $n$-party secret-sharing state $\rho_n$ is bipartite distillable
for all bipartite splits of the $n$ parties.
\end{Thm}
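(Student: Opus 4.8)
The plan is to adapt the argument of Theorem~\ref{Thm:distillability} to an arbitrary bipartite cut. For each split I would exhibit a protocol, consisting only of controlled-NOT gates and standard-basis measurements performed within each of the two sides, which is therefore LOCC across that split, and which brings the $n$-party secret-sharing state to a bipartite private state (a ``pdit''). Since such a private state is distillable by the fact quoted from~\cite{HA}, and since a protocol that is LOCC across a cut cannot produce a distillable state out of a non-distillable one, this forces $\rho_n$ to be distillable across that cut; as the cut is arbitrary, the theorem follows.

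Concretely, first I would invoke Theorem~\ref{Thm:nSS_states} to write $\rho_n$ in the form~(\ref{eq:nPQSS_state}), fix a split of the parties into a group $S$ and its complement $\bar S$, and choose a distinguished party $a\in S$ and $b\in\bar S$. On the $S$-side I would apply, for every $l\in S\setminus\{a\}$, the gate $\mathrm{CNOT}_{A_lA_a}$, and then measure each register $A_l$ with $l\in S\setminus\{a\}$ in the basis $\{\ket0,\ket1\}$; symmetrically on the $\bar S$-side with target $A_b$. After the CNOTs the register $A_a$ carries the bit $\bigoplus_{l\in S}I_l$ and $A_b$ carries $\bigoplus_{l\notin S}I_l$, while every other key register still carries its own bit $I_l$. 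Because every $I$ occurring in~(\ref{eq:nPQSS_state}) has even parity, these two sums coincide, so conditioned on a measurement outcome $\vec r$ exactly two index strings survive and they agree in that common value; relabelling the two surviving branches by $m\in\{0,1\}$, I expect the conditional post-measurement state to be
\begin{equation}
\frac12\sum_{m,m'=0}^{1}\ket{mm}_{A_aA_b}\bra{m'm'}\otimes
U_{I(m,\vec r)}\,\rho_{A_1'\cdots A_n'}\,U_{I(m',\vec r)}^{\dagger},
\label{eq:npdit}
\end{equation}
where $I(m,\vec r)$ denotes the surviving index string labelled by $m$. This is precisely a private bit across the $S$--$\bar S$ cut, with maximally entangled key part $\tfrac1{\sqrt2}(\ket{00}+\ket{11})_{A_aA_b}$ and shield $\rho_{A_1'\cdots A_n'}$, whose factors are distributed across the cut, twisted by the unitaries $U_{I(m,\vec r)}$.

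Finally I would invoke~\cite{HA} to conclude that~(\ref{eq:npdit}) is distillable across the $S$--$\bar S$ cut; since each outcome $\vec r$ occurs with positive probability and the whole procedure is LOCC with respect to that cut, $\rho_n$ is bipartite distillable for that split, which was arbitrary. The main obstacle I anticipate is the parity bookkeeping of the previous paragraph: carefully checking that the CNOTs together with the measurements really collapse the key system onto the two perfectly correlated qubits $A_a$ and $A_b$ and bring the conditional state to the exact twisted form~(\ref{eq:npdit}), so that it is genuinely a pdit and~\cite{HA} applies, and separately handling the degenerate case in which one side of the split contains a single party, where no gate or measurement is applied on that side and the remaining qubit directly carries the parity bit of the other side. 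Beyond this, everything is a routine extension of the three-party computation carried out in the proof of Theorem~\ref{Thm:distillability}.
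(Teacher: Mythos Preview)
Your approach is correct and constitutes a valid proof. It differs from the paper's in structure: the paper proceeds by induction on $n$, at each step choosing two parties $A_jA_j'$, $A_kA_k'$ lying on the \emph{same} side of the split, applying a single $\mathrm{CNOT}_{A_jA_k}$, measuring $A_j$, and observing that the outcome is an $(n-1)$-party secret-sharing state, so that the induction hypothesis (together with the base case $n=3$, which already invokes~\cite{HA}) finishes the argument. You instead reduce directly from $n$ parties to a two-party private state in one shot, by performing all the CNOTs and measurements simultaneously on each side and invoking~\cite{HA} once at the end.

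Both arguments use only local CNOTs and standard-basis measurements on each side of the cut, so both are LOCC with respect to the split; the parity bookkeeping you flag indeed goes through and pins down exactly two surviving index strings $I(0,\vec r)$ and $I(1,\vec r)$, yielding the twisted private-bit form~(\ref{eq:npdit}). The paper's induction is a bit more economical to write and makes the recursive structure of secret-sharing states explicit (each step lands on a smaller secret-sharing state rather than jumping straight to a pdit), while your direct argument is self-contained, avoids the inductive scaffolding, and exhibits the final private state explicitly for every split. Your separate handling of the degenerate single-party side is appropriate; in the paper's induction this case is implicitly absorbed by taking the two parties $A_j,A_k$ from the larger side of the split.
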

\begin{proof}
We use the mathematical induction on $n\ge 3$.
Then if $n=3$ then this theorem is true by Theorem~\ref{Thm:distillability}.
We assume that this theorem is true for $(n-1)$-party secret-sharing states.
Let $P$ be an arbitrary bipartite split \mbox{$I_0$-$I_1$} of
the $n$ parties, $\{A_1A_1', A_2A_2',\ldots, A_nA_n'\}$.
Then for $A_jA_j', A_kA_k' \in I_0$
applying $\mathrm{CNOT}_{A_jA_k}$ to $\rho_n$
and performing the projective measurement on system $A_j$
with respect to the standard basis $\{\ket{0},\ket{1}\}$,
the resulting state becomes
an $(n-1)$-party secret-sharing state.
By the induction hypothesis,
the state is bipartite distillable for the split, \mbox{$I_0-\{A_jA_j'\}$-$I_1$},
and hence $\rho_n$ is also bipartite distillable for \mbox{$I_0$-$I_1$}.
This completes the proof.
\end{proof}

{\em Example.---} 
We construct a class of secret-sharing states
in a similar way to one presented in~\cite{HHHO1}.
Consider the following state:
\begin{eqnarray}
\rho &=& a_0 \ket{\psi_0}\bra{\psi_0} \otimes \sigma_0
+ a_1 \ket{\psi_1}\bra{\psi_1} \otimes \sigma_1
\nonumber\\
&&+ a_2 \ket{\psi_2}\bra{\psi_2} \otimes \sigma_2
+ a_3 \ket{\psi_3}\bra{\psi_3} \otimes \sigma_3,
\label{eq:example0}
\end{eqnarray}
where
\begin{eqnarray}
\ket{\psi_0} &=& \frac{1}{2}(\ket{000}+\ket{011}+\ket{101}+\ket{110}),\nonumber\\
\ket{\psi_1} &=& \frac{1}{2}(\ket{000}+\ket{011}-\ket{101}-\ket{110}),\nonumber\\
\ket{\psi_2} &=& \frac{1}{2}(\ket{000}-\ket{011}+\ket{101}-\ket{110}),\nonumber\\
\ket{\psi_3} &=& \frac{1}{2}(\ket{000}-\ket{011}-\ket{101}+\ket{110}),
\label{eq:psi_i}
\end{eqnarray}
and the states $\sigma_j$ have support on orthogonal subspaces.
Then one can readily verify that $\rho$ is a 3-party
secret-sharing state, since $\|a_0\sigma_0\pm a_1\sigma_1\pm
a_2\sigma_2 \pm a_3\sigma_3\|_1=1$.

As in Ref.~\cite{HHHO1},
one can find a secret-sharing state
which can have an arbitrarily small amount of
bipartite distillable entanglement for a certain split.
In order to find such a state,
take $a_0=a_1$ and $a_2=a_3$ such that $a_1+a_2=1/2$,
and
\begin{eqnarray}
\sigma_0 &=& \rho_s \otimes \ket{0}\bra{0}, \nonumber \\
\sigma_1 &=& \rho_a \otimes \ket{0}\bra{0}, \nonumber \\
\sigma_2 &=& \rho_a \otimes \ket{1}\bra{1}, \nonumber \\
\sigma_3 &=& \rho_s \otimes \ket{1}\bra{1},
\label{eq:rho_i}
\end{eqnarray}
where $\rho_s$ and $\rho_a$ and two extreme $d\otimes d$ Werner states
\begin{eqnarray}
\rho_s=\frac{2}{d^2+d}P_{sym}= \frac{\mathcal{I}+\mathcal{F}}{d^2+d}, \nonumber \\
\rho_a =\frac{2}{d^2-d}P_{as} = \frac{\mathcal{I}-\mathcal{F}}{d^2-d}
\label{eq:Werner}
\end{eqnarray}
with the identity operator $\mathcal{I}$ on the $d\otimes d$
system and the flip operator
$\mathcal{F}=\sum_{i,j=0}^{d-1}\ket{ij}\bra{ji}$. Then we have
$\|\rho^{T_{AA'}}\|_{1}= {(d+2)}/{d}$. Therefore, since the
log-negativity is an upper bound of the distillable
entanglement~\cite{VW}, the bipartite distillable entanglement for
the split \mbox{$AA'$-$BB'CC'$} can be arbitrarily small by
increasing $d$. Nevertheless, the state is always a secret-sharing
state for any $d$.

In conclusion, we have presented necessary and sufficient conditions for secret-sharing states,
and have also shown that any secret-sharing state is bipartite distillable
for its all bipartite splits,
and hence the states could be distillable into the GHZ state.
We have furthermore generalized our results into multipartite cases,
and have exhibited a class of secret-sharing
states, which have an arbitrarily small amount of
bipartite distillable entanglement for a certain split.

D.P.C. was supported by the Korea Science and Engineering
Foundation (KOSEF) grant funded by the Korea government (MOST)
(No.~R01-2006-000-10698-0), J.S.K was supported by Alberta's
informatics Circle of Research Excellence (iCORE), and S.L. was
supported by the Korea Research Foundation Grant funded by the
Korean Government (MOEHRD, Basic Research Promotion Fund)
(KRF-2007-331-C00049).


\end{document}